\newtheorem{lem}{Lemma}
\newtheorem{thm}{Theorem}
\title{Iterative Null-space Projection Method of Adaptive Thresholding in Sparse Signal Recovery and Matrix Completion}
\author{Ashkan~Esmaeili, Ehsan~Asadi, and~Farokh~Marvasti\\Advanced Communications Research Institute (ACRI), and\\Electrical Engineering Department, Sharif University of Technology, Tehran, Iran\\~aesmaili@stanford.edu, asadikangarshahi\_ehsan@ee.sharif.edu,~marvasti@sharif.edu
	\thanks{Ehsan Asadi, and Farokh Marvasti are with the Advanced Communications Research Institute (ACRI), and Department of Electrical Engineering, Sharif University of Technology. Ashkan Esmaeili was with Stanford Electrical Engineering Department, and is Stanford MS alumni. He is now on Ph.D program with Electrical Engineering Department at Sharif University of technology and ACRI.   }
	}
\begin{document}

\maketitle
\begin{abstract}
Adaptive thresholding methods have proved to yield high Signal-to-Noise Ratio (SNR) and fast convergence for sparse signal recovery. Recently, it was observed that the robustness of  a class of iterative sparse recovery algorithms such as Iterative Method with Adaptive Thresholding (IMAT) outperforms the well-known LASSO algorithm in terms of reconstruction quality, convergence speed, and the sensitivity to the noise. In this paper, we introduce a new method towards sparse signal recovery from random samples (RS) of the sensing matrix or its generalized version Compressed Sensing (CS) problem. The logic of this method is based on iterative projections of the thresholded signal onto the null-space of the sensing matrix.
The simulations results reveal that the proposed method has the capability of yielding noticeable output SNR values when the number of samples approaches twice the sparsity number, while other methods fail to recover the signals when approaching this number.  We have also extended our algorithm to Matrix Completion (MC) scenarios and compared its efficiency to other well-known approaches for MC in the literature.
\end{abstract}

\begin{IEEEkeywords}
Null-space Projection, Sparse Signal, Adaptive Thresholding, CS and RS, Matrix Completion.
\end{IEEEkeywords}

\section{Introduction}
Compressed Sensing (CS) \cite{candes2008introduction}
\cite{candes2006compressive} \cite{foucart2013mathematical} has grown to become one of the most widespread problems of interest due to its various applications. Greedy algorithms such as Orthogonal Matching Pursuit (OMP) \cite{tropp2007signal} are among the very first algorithms used for sparse recovery. These algorithms have been dominated by robustness of algorithms introduced later on in the literature including LASSO \cite{tibshirani1996regression} and, IMAT \cite{Marvasti2012}, and RLS method\cite{holland1977robust} in terms of the reconstructed signal Signal-to-Noise Ratio (SNR). 
The general CS problem formulation is as follows:
\begin{align}
 \min & \quad  ||x||_0\\\nonumber
 \rm{s.t.} & \quad y=\Phi x,
\end{align}
where $\Phi$ is the sensing (measurement) matrix, $y$ is the measurement vector, and $x$ is the main signal assumed to be sparse. A special case of interest is RS where the measurement matrix has random zeros and ones in its diagonal \cite{marvasti2012nonuniform}. This formulation is for the scenario when the noise is absent. In the presence of the noise, the formulation changes as follows:
\begin{align}\label{eq:cs}
\min & \quad||x||_0\\\nonumber
 \rm{s.t.} & \quad||y-\Phi x||_2^2 \leq \epsilon.
\end{align}
The solution to under-determined linear system is the sparsest solution if $||x||_0\leq \frac{1}{1+\mu(\Phi)}$, where $\mu(\Phi)$ is the coherence of the matrix $\Phi$ \cite{azghani2014sparse}. Knowing this, we have developed a new method which converges to the sparsest solution knowing the fact that if the solution is unique, it will be the sparsest one. The intuition behind our method will be elaborated in section II. Now, we briefly go over the most well-known methods in sparse recovery in signal processing.
\subsection{Brief Review over Efficient Sparse Recovery Methods in the Literature}

We briefly review the methods we wish to consider in our comparisons. We briefly mention how they work and additionally their features and characteristics. Later on, in the simulations section we will also investigate the types of data we use and the accuracy, sensitivity to noise, and the required number of samples in order to recover the main signal. We will compare diverse aspects of the introduced methods to our proposed method called Iterative Null-space Projection Method with Adaptive Thresholding (INPMAT). In general, sparse recovery methods could be classified into three main groups: 1- $l_1$- minimization approaches, 2- Greedy methods, and 3- $l_0$ approximation methods.\\

\subsubsection{$l_1$-minimization approach}
\subsection*{Lasso}
We approximate $l_0$ - norm with its closest convex surrogate which is $l_1$- norm. Therefore, the resulting problem formulation is as follows:
\begin{align}\label{eq:csl1}
\min & \quad||x||_1\\\nonumber
\rm{s.t.} & \quad||y-\Phi x||_2^2 \leq \epsilon.
\end{align}
The dual of the problem (\ref{eq:csl1}) is the following, which is known as LASSO minimization problem:
\begin{align}\label{eq:lasso}
 \min & \quad||x||_1 + \lambda ||\Phi x-y||_2^2\\\nonumber
 \rm{s.t.} & \quad \lambda \geq 0
\end{align}
The LASSO is known to yield sparse solutions for $x$ in addition to resistance to noise.
In this section we review Lasso method. Lasso is the following minimization problem:
\begin{equation}
\min_x  ||y-\Phi x ||_2^2 +\lambda ||x||_1 
\end{equation}
 .The optimal $\lambda$ is obtained by sweeping the grid of values and minimizing error on the test set.
The solution to Lasso can be derived by many different implementations. One approach in solving the $l_1$-norm minimization problem is the ADMM method \cite{Boyd:2011:DOS:2185815.2185816}.\\

\subsubsection{Greedy approach}
\subsubsection*{OMP and CoSaMP}
In \cite{tropp2007signal}, we can find how OMP works. CoSaMP method is introduced in \cite{needell2009cosamp} as a greedy method in recovering sparse signal. These two are somehow similar to each other in terms of the structure; however, the CoSaMP is more complicated and could be considered as a generalized OMP. The general behavior of this method is that in high SNR input scenarios it can recover the signal provided that it has enough samples or measurements in the under-determined setting. These two methods require to know the sparsity level which is considered as a drawback in comparison to the INPMAT which is independent of knowing the sparsity level. In Analysis and simulations section, we will see that the INPMAT requires far fewer samples in order to recover the main signal than OMP and COSaMP.\\

\subsubsection{$l_0$-approximation}
\subsubsection*{IHT}
One can find how Iterative Hard Thresholding (IHT) works in \cite{blumensath2009iterative}. This method applies one update followed by one thresholding step. The thresholding is based on selecting the $s$ largest components; where $s$ is the sparsity number, and $s$ is known to IHT, or using a hard threshold in each iteration.

\subsubsection*{IMAT}
The Iterative Method with Adaptive Thresholding is introduced in \cite{Marvasti2012} by Marvasti et al. This method and its modified version IMATCS work as stated in \cite{azghani2014sparse}. One update step is followed by thresholding with some value which is shrinking during iterations exponentially. One can find two implementation versions of IMAT in \cite{imatsite}. In \cite{esmaeili2016comparison}, and \cite{esmaeili2016fast}, we have seen that IMATCS can outperform Lasso in some scenarios like dealing with missing data. IMAT has been shown to be profitable as in microwave imaging \cite{azghani2014microwave}. In general, ierative methods are a large class of CS methods \cite{chartrand2008iteratively}, from which we are focusing on IMAT.

\subsubsection*{SL0}
The SL0 method is first introduced in \cite{4663911}. It is based on approximating the $L_0$-norm with a smooth function. The implementations are also provided in the simulations section.

\subsection{Review over Matrix Completion (MC)}
We will also provide the extended version of our proposed algorithm to MC problems. MC has become popular due to the Netflix recommendation contest. MC for low rank matrices have been scrutinized recently by many authors, and various methods have been developed towards dealing with this problem such as \cite{keshavan2010matrix}, \cite{cai2010singular}, and \cite{mazumder2010spectral}. In general, MC is an extended version of sparse recovery in matrix domains. Considering certain constraints for sparse signals such as the number of required samples, or the rank of the sensing matrices lead to unique recovery of the sparse signal. The concept of low-rank matrices is similar to sparse matrices in some sense. In fact, low rank induces sparsity in matrix domain. In order to deal with matrix completion, the convex surrogate of the rank function is considered, and the problem to be solved is as follows:

\begin{equation}
X^*=\mathrm{argmin}_X ||P_E(X-A)||_2^2+\lambda||X||_*,
\end{equation} 
where $P_E(.)$ is the projection onto the observation index set, and $||X||_*$ is the trace norm of $X$. $X^*$ is supposed to be the reconstructed matrix. We use the concept of INPMAT in order to extend the problem to the matrix completion scenario. We mention the methodology briefly in section IV and provide related simulations in section VI.
The rest of the paper is organized as follows: In section II, we introduce INPMAT. In section III, we provide noise analysis. In section IV, we provide an extension of the method to matrix completion. In section V, we take a detour to provide a modified version of our algorithm which is derived by looking into our formulation from a different perspective and provide the heuristics. Finally, we conclude the paper in the last section.

\section{INPMAT Procedure}
 Now, we introduce our method. Algorithm \ref{tab:L2SAT} provides the procedure of INPMAT. Let $x^k$ denote the signal recovered after $k$-th step. Let $T^k$ denote the diagonal matrix which indicates the support of the recovered signal on its diagonal in the $k$-th iteration. $1$'s on the diagonal show that the element is non-zero and $0$'s show the opposite. Let $\mathbf{I}(.)$ denote the indicator function of the expression inside the parentheses, i.e, it returns $1$ if the expression inside the parentheses is true and returns $0$ otherwise.
First, we project the signal onto the subspace created by the eigenvectors of the diagonal matrix $T$ in each step. This is in fact equivalent to confining the signal index to the support determined by the nonzero elements of the diagonal of the matrix $T$. The projected vector is heuristically the closest solution with sparsity number equal to the dimension of the subspace formed by $T$ to the set $S_\Phi=\{x: y=\Phi x\}$. The reason is that knowing the support we only need to apply a pseudo-inverse to find the sparsest solution which has minimum local distance to $S_{\Phi}$. Indeed, we wish the components outside the support to shrink as much as possible. Next, we aim to project the resultant image onto $S_\Phi$, 
i.e. we wish to find the solution to be in the set of points holding in the constraint $y=\Phi x$ .
 That is why we project the thresholded signal onto the $S_{\Phi}$. We iteratively continue these steps until the stopping criterion holds. The stopping criterion is as follows:
\begin{equation}
||y-\Phi x^k||_2^2 \geq \epsilon
\end{equation}

Now, we proceed to mention the concept of adaptive thresholding that we use in this paper. The constraint $y=\Phi x$ forces the solutions of $l_0$-norm minimization to fall in the translated null-space of $\Phi$, i.e, $S_\Phi$. Simultaneously, we are looking forward to finding the vector in this set which is the sparsest solution. 
\subsection{Geometric Interpretation}
We consider all $s$-dimensional subspaces, and we show that projecting $x$ onto each of these subspaces followed by projecting back onto the $S_{\Phi}$ reaches a new solution of $y=\Phi x$ whose components outside the support shrink, i.e. $||(I-T)x^{k+1}||_2^2 \leq ||(I-T)x^{k}||_2^2$. In addition, we also desire to make the solution sparser. Thus, we regularize the residual defined above with $Tr(T)$ to make the solution sparser. As a result, the objective function we minimize is as follows:
\begin{equation}\label{obj}
\begin{split}
\min_{\buildrel {T,} \over {x \in \Phi ^\dagger y+\rm{N(\Phi)}}}  & \quad  ~||(I-T)x||_2^2+\lambda Tr(T)\\
\end{split}
\end{equation}
We denote the objective in \ref{obj} with $f(x,t).$ 
\begin{thm}\label{thm1}
Two-step projections onto the $k$-dimensional support subspace, and $S_\Phi$ leads to decrease in the contraction operator $||(I-T)x^{k}||_2^2$, i.e, $||(I-T)x^{k+1}||_2^2 \leq ||(I-T)x^{k}||_2^2.$
\end{thm}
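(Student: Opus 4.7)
The plan is to unpack the two-step projection explicitly and then chain two standard inequalities about orthogonal projections. Denote by $T$ the support indicator used at the $k$-th step, so $T$ is a diagonal $0/1$ matrix and $I-T$ is also an orthogonal projection (onto the complementary coordinate subspace). By construction of the previous iterate we have $x^k \in S_\Phi$. The first substep is the projection onto $\operatorname{range}(T)$, which yields $Tx^k$; the second substep is the metric projection $x^{k+1} = P_{S_\Phi}(Tx^k)$ onto the affine subspace $S_\Phi = \{x : \Phi x = y\}$, realized explicitly as $P_{S_\Phi}(z) = z - \Phi^{\dagger}(\Phi z - y)$.

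First I would invoke non-expansiveness of the metric projection onto the affine set $S_\Phi$. Because $x^k$ already lies in $S_\Phi$ and $x^{k+1}$ is the closest point of $S_\Phi$ to $Tx^k$,
\begin{equation*}
\|x^{k+1} - Tx^k\|_2^2 \;\leq\; \|x^k - Tx^k\|_2^2 \;=\; \|(I-T)x^k\|_2^2.
\end{equation*}
Second, I would exploit the identity $(I-T)T = 0$, which gives
\begin{equation*}
(I-T)x^{k+1} \;=\; (I-T)\bigl(x^{k+1} - Tx^k\bigr).
\end{equation*}
Since $I-T$ is itself an orthogonal projection, its operator norm is at most $1$, so
\begin{equation*}
\|(I-T)x^{k+1}\|_2^2 \;\leq\; \|x^{k+1} - Tx^k\|_2^2,
\end{equation*}
and chaining the two inequalities yields the claim.

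The only mildly subtle step is the algebraic identity that lets me replace $x^{k+1}$ by $x^{k+1}-Tx^k$ inside $\|(I-T)\cdot\|_2$; once that is noticed, everything else reduces to Pythagoras and non-expansiveness for projections onto a linear subspace and an affine subspace. I do not anticipate a real obstacle, but I would verify two modeling assumptions before writing up a final proof: that the ``projection onto $S_\Phi$'' used in the algorithm is indeed the Euclidean metric projection (otherwise the first inequality can fail), and that the support matrix $T$ is held fixed across the two substeps that define $x^{k+1}$ from $x^k$ — otherwise $T^k$ versus $T^{k+1}$ effects would need to be tracked separately and the identity $(I-T)T=0$ could not be applied with a single $T$.
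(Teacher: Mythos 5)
Your proof is correct and takes essentially the same route as the paper's: your best-approximation inequality is the paper's first Pythagorean identity with the nonnegative term $\|x^k-x^{k+1}\|_2^2$ dropped, and your operator-norm bound for $I-T$ is the paper's second Pythagorean identity with $\|T(x^k-x^{k+1})\|_2^2$ dropped. The two modeling assumptions you flag (Euclidean metric projection onto $S_\Phi$, and a single fixed $T$ across both substeps) are exactly what the paper's argument implicitly assumes as well.
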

\begin{proof}
By Pythagorean theorem we have, 
\begin{equation}
||(I-T)x^{k}||_2^2=||x^k-x^{k+1}||_2^2+||x^{k+1}-Tx^{k}||_2^2
\end{equation}
Similarly, 
\begin{equation}
||x^{k+1}-Tx^k||_2^2=||Tx^k-Tx^{k+1}||_2^2+||(I-T)x^{k+1}||_2^2
\end{equation}
Thus, 
\begin{align}
& ||(I-T)x^k||_2^2 = \\ \nonumber
& ||T(x^k-x^{k+1})||_2^2+||(I-T)x^{k+1}||_2^2+||x^k-x^{k+1}||_2^2
\end{align}

\begin{equation}
\Rightarrow
||(I-T)x^k||_2^2 \geq ||(I-T)x^{k+1}||_2^2
\end{equation}
\end{proof}
\rm{First, we argue that if instead of using pseudo-inverse, we fix $T$, and project $x$ onto $S_\Phi$ and $T$ iteratively, the result of iterations converges to the solution by the pseudo-inverse \cite{amini2008convergence}. Iterative methods do not have the complexity of computing pseudo-inverse.} Theorem \ref{thm1} proves how the iterative method which could be used instead of pseudo-inverse method converges to a locally optimal point.\\

\subsection{Algebraic Interpretation}

\begin{algorithm} 
		\caption{INPMAT}\label{tab:L2SAT}
		\begin{algorithmic}
			\State \textbf{Input:}
			\State {A measurement matrix} ${\mathbf \Phi} \in \mathbb{R}^{m\times n}$
			\State {A measurement vector } ${\mathbf y}\in \mathbb{R}^{m}$
			\State \textbf{Output:}
			\State {A recovered estimate }$\widehat{{\mathbf x}}\in \mathbb{R}^{n}${ of the original signal.}
			
			\Procedure{INPMAT}{${\mathbf y},{\mathbf \Phi},{\mathbf x}$}
			\State ${\mathbf x}^{0}\gets \Phi^\dagger y$
					\State $k \leftarrow 0$
					\State $thr \leftarrow \max(|x^0|)$
			\While {$||y-\Phi x^k||_2^2 \geq \epsilon$}
						\State $T^{k}\leftarrow \mathbf{diag}(\mathbf{I}(|x^{k}|\geq thr)) $
			\State $s^{k}=(\Phi T^k)^\dagger y $\label{kkkk}		
			\State $x^{k+1}\leftarrow x^0+ (I-\Phi^\dagger\Phi)s^k$	\label{kkk}	

		    \State $thr \leftarrow \max(|(I-T^k)x^{k}|)$
			\State $k \leftarrow k+1$
			\EndWhile 
			\State \textbf{return} $\quad \widehat{{\mathbf x}}\gets {{\mathbf x}}^{k} $
			\EndProcedure
		\end{algorithmic}
\end{algorithm}

\begin{lem}
Let $\zeta$ denote the set of all diagonal matrices with diagonal entries in the interval $[0,1]$. The objective in \eqref{obj} is convex w.r.t $T$ over $\zeta$ and also w.r.t $x$. 
\end{lem}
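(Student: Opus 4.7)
The plan is to expand the objective in coordinates and reduce each direction of convexity to a sum of one-dimensional quadratics with nonnegative leading coefficient.

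First I would write $T = \mathrm{diag}(t_1,\ldots,t_n)$ with $t_i \in [0,1]$, so that
\[
\|(I-T)x\|_2^2 \;=\; \sum_{i=1}^{n} (1-t_i)^2 x_i^2, \qquad \mathrm{Tr}(T) \;=\; \sum_{i=1}^{n} t_i.
\]
The set $\zeta$ is the box $[0,1]^n$ under this identification, which is convex, and the affine set $\Phi^\dagger y + \mathrm{N}(\Phi)$ over which $x$ ranges is also convex, so the convexity question is well posed in each variable.

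For convexity in $T$ at fixed $x$: the trace term is linear in the $t_i$, hence convex. Each summand $(1-t_i)^2 x_i^2$ is a quadratic in the scalar $t_i$ with leading coefficient $x_i^2 \ge 0$, hence convex in $t_i$; summing preserves convexity, so $f(x,\cdot)$ is a sum of convex functions on $[0,1]^n$ and therefore convex on $\zeta$.

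For convexity in $x$ at fixed $T$: the trace term is constant, and
\[
\|(I-T)x\|_2^2 \;=\; x^{\top}(I-T)^{\top}(I-T)x \;=\; x^{\top}(I-T)^2 x,
\]
where $(I-T)^2 = \mathrm{diag}((1-t_i)^2)$ is diagonal with nonnegative entries and hence positive semidefinite. A quadratic form associated to a PSD matrix is convex, so $f(\cdot,T)$ is convex in $x$. No step here is more than a routine computation; the only thing to be careful about is to perform the expansion coordinate-wise so that the nonnegativity of the leading coefficients ($x_i^2$ in one case, $(1-t_i)^2$ in the other) is transparent, since that is what prevents either block from being merely quadratic rather than convex.
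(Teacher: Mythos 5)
Your proof is correct and follows essentially the same route as the paper's: coordinate-wise expansion showing the trace term is linear (hence convex) in $t$, the first term is a diagonal quadratic in $t$ with nonnegative coefficients $x_i^2$, and the objective is a PSD quadratic form in $x$ for fixed $T$. If anything, your version is slightly more careful, since the paper asserts the Hessian in $t$ is positive-definite when it is in general only positive semidefinite (some $x_i$ may vanish), which is all that convexity requires.
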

\begin{proof}
\rm{Convexity w.r.t $x$ is obvious since the objective is quadratic in $x$ if we fix $T$. Now We assume that $x$ is fixed. The $Tr$ is a convex operator; therefore, it is enough to show that the first term is convex in $T$.
 Let $t=[t_1,t_2,....t_n]$ denote the diagonal of $T$. Now, if we define $h(t)=||(I-T)x||_2^2$ then the Hessian could be calculated as  $H_{ii}(h([t]))=x_i^2 , H_{ij}=0 ,~ \forall i \neq j.$ Thus the Hessian is positive-definite and the proof is complete.}

\end{proof}
It is obvious that setting $\lambda$ to $\infty$ leads to minimizing the $l_2$-norm of the signal $x$. Thus, we start with the best $l_2$-norm solution for $x$, i.e. setting $x=\Phi ^\dagger y$. Then, we shrink $\lambda$ and find the solution for each problem and then set the updated solution as the starting point for the next iteration which leads to sparsifying the sequence of solutions for each problem.
\begin{thm}\label{th2}
	Let the minimum distance between $S_\Phi$ and all the subspaces induced by the set of matrices in $\zeta$ which do not intersect $S_\Phi$ be denoted by $\epsilon$. Also let $k$ be the sparsity number of the sparsest solution to \ref{obj}, then the optimal solution of the problem \ref{obj} is the sparsest (best $l_0$- norm) solution if $ 0< \lambda < \frac{\epsilon^2}{k}$.
\end{thm}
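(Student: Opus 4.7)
The plan is to show that the candidate $(x^{*}, T^{*})$, with $x^{*}$ the sparsest solution of $y = \Phi x$ and $T^{*}$ the $0$-$1$ diagonal matrix indicating $\mathrm{supp}(x^{*})$, realizes the infimum of $f(x, T)$ over $(x, T) \in S_{\Phi} \times \zeta$, and to eliminate every competing pair by a dichotomy on whether the coordinate subspace cut out by $T$ meets $S_{\Phi}$. The hypothesis $\lambda < \epsilon^{2}/k$ is exactly what separates the two regimes.

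First I would compute the objective at the candidate. Since $T^{*}$ masks exactly $\mathrm{supp}(x^{*})$, we have $(I - T^{*}) x^{*} = 0$, and $\mathrm{Tr}(T^{*}) = k$, so $f(x^{*}, T^{*}) = \lambda k$. This yields the upper bound $\inf f \le \lambda k$.

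Next, for a generic feasible pair $(x, T)$ I would set $S = \{i : t_{i} > 0\}$ and let $V_{S} = \mathrm{span}\{e_{i} : i \in S\}$. If $V_{S} \cap S_{\Phi} = \emptyset$, the hypothesis on $\epsilon$ yields $\mathrm{dist}(x, V_{S}) \ge \epsilon$ for every $x \in S_{\Phi}$, and since $\|(I - T) x\|_{2}^{2} = \sum_{i}(1 - t_{i})^{2} x_{i}^{2} \ge \sum_{i \notin S} x_{i}^{2} = \mathrm{dist}(x, V_{S})^{2}$, we get $f(x, T) \ge \epsilon^{2} > \lambda k$, ruling the pair out. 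In the complementary case $V_{S} \cap S_{\Phi} \neq \emptyset$, a solution of $y = \Phi x$ lives inside $V_{S}$, whence $|S| \ge k$ by minimality. Choosing the $0$-$1$ extreme $T$ supported on $S$ and $x \in V_{S} \cap S_{\Phi}$ collapses $f$ to $\lambda |S| \ge \lambda k$, and equality pins $x$ to the (generically unique) sparsest solution $x^{*}$.

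I expect the main obstacle to be the fractional portion of the second case, where $t_{i}$ sits strictly between $0$ and $1$. The objective is convex, rather than concave, in $T$, so its minimum over the box $\zeta$ can be attained in the interior, and one must show that the per-coordinate optimum $t_{i}^{\ast} = \max(0, 1 - \lambda/(2 x_{i}^{2}))$, aggregated, still cannot undercut $\lambda k$ once $x$ strays from $x^{*}$. The strict inequality $\lambda < \epsilon^{2}/k$ should provide the slack needed to close this gap via a comparison between $\sum_{i}\bigl[(1-t_{i})^{2}x_{i}^{2} + \lambda t_{i}\bigr]$ and $\lambda k$, but the bookkeeping when $|S| > k$ is the delicate part of the argument.
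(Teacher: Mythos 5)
Your argument is essentially the paper's own: the same dichotomy on whether the coordinate subspace attached to $T$ meets $S_\Phi$, the same candidate value $f(x^*,T^*)=\lambda k$, and the same use of $\lambda k<\epsilon^2$ to dispose of the non-intersecting case. Where you go beyond the paper is in flagging the fractional-$T$ regime as unfinished, and you are right to worry: that step cannot be closed over the full box $\zeta$. Fixing $x=x^*$ and minimizing each coordinate of $\sum_i\bigl[(1-t_i)^2x_i^2+\lambda t_i\bigr]$ over $t_i\in[0,1]$ gives $t_i=1-\lambda/(2(x_i^*)^2)$ when $(x_i^*)^2\ge\lambda/2$ (and $t_i=0$ otherwise), and every per-coordinate minimum is strictly below $\lambda$; summing over the $k$ nonzero coordinates yields $\min_{T\in\zeta}f(x^*,T)<\lambda k=f(x^*,T^*)$. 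So $(x^*,T^*)$ is not the minimizer over $S_\Phi\times\zeta$, and no amount of bookkeeping will show that it is. The paper's proof does not escape this either: its key line in the intersecting case, $\lambda\,\mathrm{Tr}(T)\ge\lambda k$, tacitly identifies $\mathrm{Tr}(T)$ with the dimension of the induced subspace, which is valid only for $\{0,1\}$-valued $T$. The honest repair is to restrict the minimization over $T$ to zero-one diagonal matrices (the extreme points of $\zeta$, and the only matrices Algorithm \ref{tab:L2SAT} ever produces); with that restriction both of your cases close exactly as you wrote them: $\mathrm{Tr}(T)=|S|\ge k$ when $V_S$ meets $S_\Phi$, by uniqueness of the $k$-sparse solution; $f\ge\epsilon^2>\lambda k$ otherwise; and equality at $\lambda k$ forces $x=x^*$.
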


\begin{proof}
	if $T \in \zeta$ intersects with $S_\Phi$, then there is a solution to the $y=\Phi x$ which is inside the subspace induced by $T$. We have assumed that the sparsity number for solution is $k$, and the solution is unique, therefore this intersection point has sparsity number larger than $k$. Thus,
$\lambda Tr(T) \geq \lambda k$ and as a result $f(x,T) \geq f(x^*,T^*)= \lambda k$, where $x^*$ is the sparsest solution and $T^*$ is the support of $x^*$.\\
If $T \in \zeta$ does not intersect $S_\Phi$, then $||(I-T)x||_2^2 \geq \epsilon^2$. We have assumed that $\lambda \leq \epsilon^2/k$. Thus, $f(x,T) > ||(I-T)x||_2^2 \geq \epsilon = \frac{\epsilon^2}{k}k > \lambda k = f(x^*,T^*).$
\end{proof}
\subsection{INPMAT Implementation Knowing Sparsity Number}
We can easily adapt the INPMAT to the case when we know sparsity. We do this by changing the stopping criterion in Algorithm \ref{tab:L2SAT} to a new criterion which stops if the number of iterations exceeds the sparsity number. Finally, knowing sparsity number, we can confine the output signal to the support formed by the $s$ largest components. We compared the performance of INPMAT in this case with the previous methods when they also have the knowledge of the sparsity number.

\section{Noise Analysis}
In this section, we analyze how the reconstruction quality varies w.r.t noise power. 
In the general noisy model, we can assume the CS model is modified as follows: 
\begin{equation}
y=\Phi x + \epsilon
\end{equation}
Let $\epsilon$ denote the i.i.d Gaussian noise vector which is added to the observation vector. We assume $\epsilon \sim N(0,\sigma^2 I)$.
\begin{thm}
We assume $\Phi$ is RIP for $k$-sparse signals with constant $\delta_k$, i.e. $\forall x \in R^n: (1-\delta_k)||x||_2 \leq ||\Phi x||_2 \leq (1+\delta_k)||x||_2$, then the variance of the output noise could be bounded as:
\begin{equation}
\frac{1}{1+\delta_k}\sigma^2 \leq Var(\epsilon_{out}) \leq \frac{1}{1-\delta_k}\sigma^2
\end{equation}
\end{thm}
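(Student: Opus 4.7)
The plan is to track how the additive measurement noise $\epsilon$ propagates through the reconstruction pipeline and land on a single linear transformation $\epsilon_{\mathrm{out}} = A\epsilon$ whose covariance can be bounded via the RIP. Assuming that at convergence the estimated support equals the true support, the diagonal selector $T^k$ stabilises to $T$ (the indicator of the true $k$-sparse support of $x$). On that support we have $\Phi x = (\Phi T)(Tx)$, so writing the reconstruction step $s^k = (\Phi T)^{\dagger} y$ with $y = \Phi x + \epsilon$ gives
\begin{equation}
s^k = (\Phi T)^{\dagger}\Phi T x + (\Phi T)^{\dagger}\epsilon = x + (\Phi T)^{\dagger}\epsilon,
\end{equation}
because $(\Phi T)^{\dagger}\Phi T$ is the identity on the support (full column rank is guaranteed by RIP with $\delta_k < 1$). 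Hence the recovery error reduces to $\epsilon_{\mathrm{out}} := \hat{x} - x = (\Phi T)^{\dagger}\epsilon$, and what remains is a purely linear-algebraic problem.

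Next I would compute the covariance. Since $\epsilon \sim \mathcal{N}(0,\sigma^2 I)$,
\begin{equation}
\mathrm{Cov}(\epsilon_{\mathrm{out}}) = \sigma^2\,(\Phi T)^{\dagger}\bigl((\Phi T)^{\dagger}\bigr)^{T} = \sigma^2\bigl((\Phi T)^{T}\Phi T\bigr)^{-1},
\end{equation}
where the second equality uses full column rank of $\Phi T$. The spectrum of $(\Phi T)^{T}\Phi T$ is bounded by the RIP: for any vector $v$ supported on the $k$ indices selected by $T$, the hypothesis gives $(1-\delta_k)\|v\|_2 \le \|\Phi T v\|_2 \le (1+\delta_k)\|v\|_2$, so the eigenvalues of $(\Phi T)^{T}\Phi T$ lie in $[(1-\delta_k),(1+\delta_k)]$ (in the paper's convention) and those of its inverse lie in $[1/(1+\delta_k),\,1/(1-\delta_k)]$. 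Reading off the extreme eigenvalues then yields the claimed two-sided bound on $\mathrm{Var}(\epsilon_{\mathrm{out}})$, understood either entrywise or as the normalised expected energy $E\|\epsilon_{\mathrm{out}}\|_2^2/k$.

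The main obstacle, and the place where I expect to spend the most care, is the first step: justifying that the iterative updates of Algorithm~\ref{tab:L2SAT} really do collapse to the single clean expression $\hat{x} = (\Phi T)^{\dagger} y$ on the correct support. The iterations contain both $x^0 = \Phi^{\dagger} y$ and the correction $(I - \Phi^{\dagger}\Phi)s^k$, so I would argue that at the fixed point the support estimate is frozen, the correction places $x^{k+1}$ into the translated null-space of $\Phi$ while being consistent with $y = \Phi x$, and therefore restricted to $T$ it agrees with the pseudo-inverse solution on that support. Once this structural identification is in place, the variance bound is essentially immediate from the RIP eigenvalue inequality, so the probabilistic content of the theorem is minimal and the work is really in pinning down the noise propagation map.
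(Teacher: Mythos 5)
Your overall strategy --- identify the output noise as a single linear map applied to $\epsilon$, compute its covariance exactly, and bound its eigenvalues via the RIP --- is genuinely different from the paper's proof. The paper instead tracks the noise through the literal update $x^{k+1}=x^{0}+(I-\Phi^{\dagger}\Phi)s^{k}$ and bounds its $\ell_{2}$-norm with the triangle inequality, using that $(I-\Phi^{\dagger}\Phi)$ is a projector with eigenvalues $0$ or $1$ and that $\|(\Phi T^{k})^{\dagger}\epsilon\|_{2}\leq\frac{1}{1-\delta_{k}}\|\epsilon\|_{2}$ on $k$-sparse supports. Your route has the advantage of producing the two-sided bound that actually appears in the theorem statement, including the lower bound $\frac{\sigma^{2}}{1+\delta_{k}}\leq \mathrm{Var}(\epsilon_{\mathrm{out}})$, which the paper's own argument never establishes (it only shows $0\leq\|\epsilon_{\mathrm{out}}\|_{2}$).

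However, the step you yourself flag as the main obstacle is where the argument breaks: the fixed point of Algorithm \ref{tab:L2SAT} is \emph{not} $\widehat{x}=(\Phi T)^{\dagger}y$. Substituting $y=\Phi x+\epsilon$ with $x$ supported on $T$ into the update gives
\begin{equation*}
x^{k+1}=x^{0}+(I-\Phi^{\dagger}\Phi)s^{k}=x+\Phi^{\dagger}\epsilon+(I-\Phi^{\dagger}\Phi)(\Phi T)^{\dagger}\epsilon,
\end{equation*}
so the output noise is $\epsilon_{\mathrm{out}}=\Phi^{\dagger}\epsilon+(I-\Phi^{\dagger}\Phi)(\Phi T)^{\dagger}\epsilon$ rather than $(\Phi T)^{\dagger}\epsilon$: the noise carried by the initialization $x^{0}=\Phi^{\dagger}y$ survives, because $(I-\Phi^{\dagger}\Phi)$ annihilates only the row-space component of $s^{k}$, not the $\Phi^{\dagger}\epsilon$ term in $x^{0}$. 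This extra term is exactly what forces the paper's constant to be $\frac{2-\delta_{k}}{1-\delta_{k}}$ (restated after the proof as $\mathrm{Var}(\epsilon_{\mathrm{out}})\leq\bigl(\frac{2-\delta_{k}}{1-\delta_{k}}\bigr)^{2}\sigma^{2}$) instead of the $\frac{1}{1-\delta_{k}}$ your collapsed model would give, so your structural identification needs to be repaired before the covariance computation applies. A secondary issue: under the paper's unsquared RIP convention $(1-\delta_{k})\|v\|_{2}\leq\|\Phi Tv\|_{2}\leq(1+\delta_{k})\|v\|_{2}$, the eigenvalues of $(\Phi T)^{T}\Phi T$ lie in $[(1-\delta_{k})^{2},(1+\delta_{k})^{2}]$, so your bound comes out as $\sigma^{2}/(1\pm\delta_{k})^{2}$; you would need the standard squared-norm form of the RIP for the exponents to match the theorem as stated.
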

\begin{proof}
Since $\Phi$ is R.I.P for $k$-sparse signals with constant $\delta_k$, we can conclude that 
\begin{equation}\label{gf}
\frac{1}{1+\delta_k}||x||_2 \leq ||\Phi ^\dagger x||_2 \leq \frac{1}{1-\delta_k}||x||_2
\end{equation}
It is worth noting that the property in \ref{gf} is held on each and every $k$ columns of $\Phi$ and $\Phi ^ \dagger$ since we are working with $k$- sparse signals. Therefore, we can claim that if the sensing matrix $\Phi$ has R.I.P condition, then the \ref{kkkk}-th step in Algorithm \ref{tab:L2SAT}, is acting on at most $k$-sparse signals, and as a result we can establish the following bounds for noise terms at the output.
 \begin{equation}\label{ex}
\frac{1}{1+\delta_k}||\epsilon||_2 \leq ||\Phi ^\dagger \epsilon||_2 \leq \frac{1}{1-\delta_k}||\epsilon||_2
\end{equation}
The eigenvalues of any projection matrix are either equal to $0$ or $1$.
Therefore, using the \ref{kkk}-th step of Algorithm \ref{tab:L2SAT} and applying triangle inequality, we have the following bound for the noise term in $x^k$:
\begin{equation}
0 \leq ||\epsilon_{out}||_2 \leq ||\epsilon||_2+\frac{1}{1-\delta_k}||\epsilon||_2=\frac{2-\delta_k}{1-\delta_k}||\epsilon||_2
\end{equation}
\end{proof}
We notice that the threshold is non-increasing and as a result, we always work with at most $k-$sparse signals. In other words, the recovered support size never exceeds the sparsity of the main signal.
Thus, we always can take advantage of the RIP property and therefore, we can establish the following bound:
\begin{equation}
Var(\epsilon_{out}) \leq \left(\frac{2-\delta_k}{1-\delta_k}\right)^2 \sigma^2 ,
\end{equation}
we have shown in theorem \ref{th2} that the recovered signal converges to the original signal. If we denote the initial SNR as $SNR_0$, then for the output SNR we have the following:
\begin{equation}
SNR_{out} \geq SNR_0 + 20log\left(\frac{1-\delta_k}{2-\delta_k}\right),
\end{equation}
which shows that for small values of $\delta_k$ the recovery SNR does not fall off $6$dB of the initial SNR; however, it could be very large depending on the sensing matrix and could even lead to large values of SNR in sparse recovery according to the reconstruction power of the algorithm. The achieved bound; however, is simply a lower bound. In simulations, we will see that the reconstructed SNRs are not necessarily sticking to this bound and are usually larger than this bound.    
\subsection{Noise alleviation using Tikhonov regularization}
When the matrix $\Phi$ is ill-posed, the noise term in the initiation point can become significant. The reason is that if we select the least square minimizer of $||y - \Phi x||_2^2$ as our starting point, i.e.:
\begin{equation}
x^0=\Phi^ \dagger (\Phi x + \epsilon).
\end{equation}
we observe that the term $\Phi ^ \dagger \epsilon$ (noisy component) would be noticeable if the condition number of sensing matrix is large. In order to relieve the effect of noise in the starting point, we use the Tikhonov regularization, i.e. we use the solution to the following general minimization problem as the starting point:
\begin{equation}\label{eq21}
x^0=\underset{x}{\mathrm{argmin}}~~||y-\Phi x||_2^2 + \mu ||\Gamma x||_2^2,
\end{equation}
where $\Gamma$ in \ref{eq21} is set to $I_{n\times n}$.
$\mu$ is the regularization parameter, the choice of which is explained in \cite{vogel1996non}, \cite{choi2007comparison}, \cite{hochstenbach2010iterative}.
\begin{figure}
	\centering
	\includegraphics[width=1\linewidth]{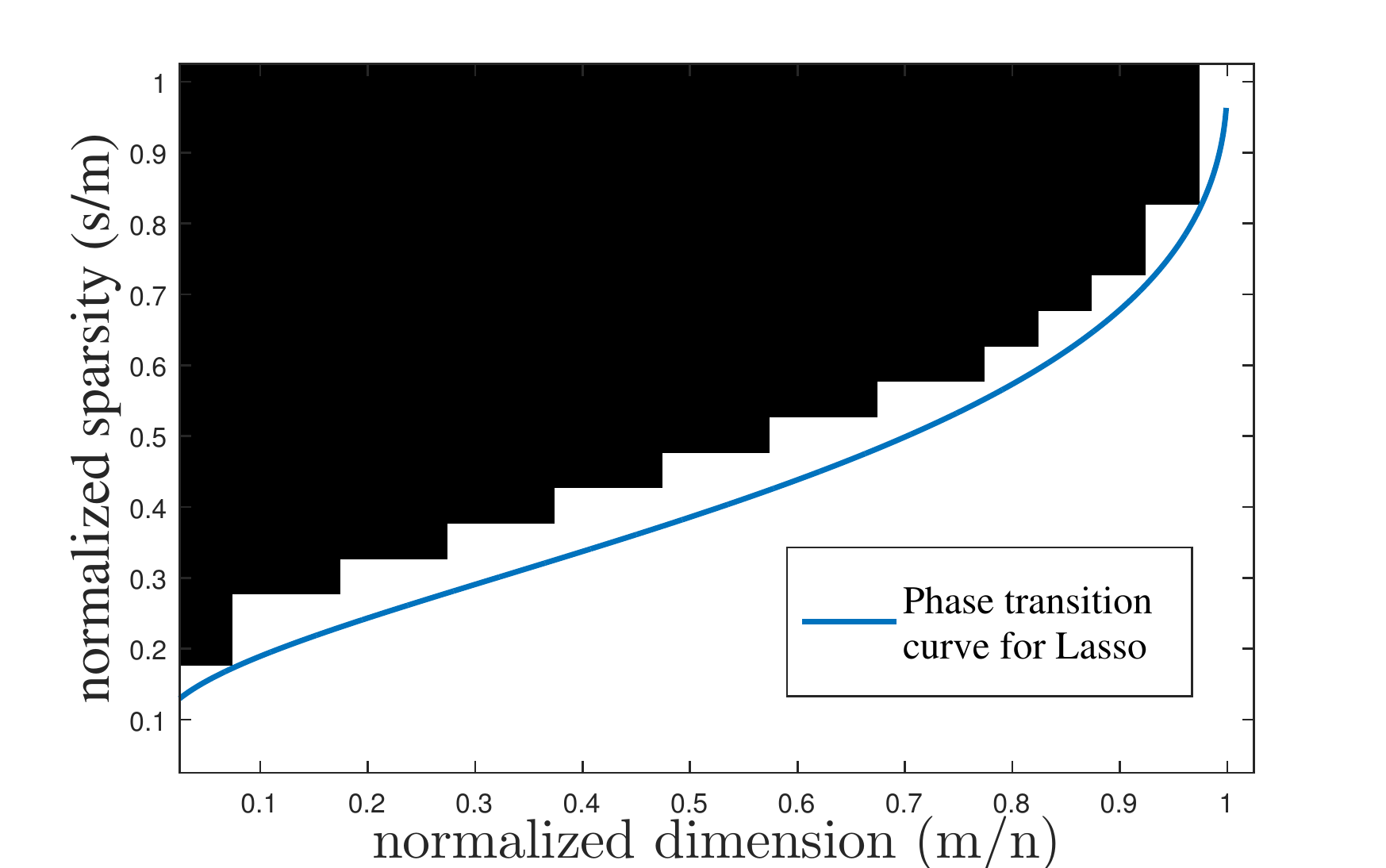}
	\caption{The Comparison of INPMAT phase transition curve with Lasso phase transition curve. The horizontal axis is normalized sparsity number. The vertical axis is normalized dimension.}\label{fig:ashkan1}
\end{figure}
\section{Extention to Matrix Completion}
In \cite{goldberg2010transduction}, Goldberg et al have improvised a method for matrix completion for low rank scenarios. The thresholding on singular values in their setting is carried out in an adaptive fashion. In summary, they have made an update at each iteration and thresholded the singular values in order to make the eigenvalues vector more sparse. Now, we provide intuition on how to extend INPMAT to matrix Completion problems. We call the extension of our algorithm to this scenario: Matrix Imputation with method of Adaptive Thresholding (MIMAT), where $\sigma_c(X)$ is the $c$-th singular value of matrix $X$, $P_\Omega$ is the projection operator onto the observed entries index set, and $P_{\Omega^\bot}$ is the projection onto the compliment of the set $\Omega.$
\begin{algorithm} 
	\caption{MIMAT}\label{MIMAT}
	\begin{algorithmic}[1]
		\State \textbf{Input:}
		\State {Initially observed matrix} ${\mathbf A}$
		\State {Stopping Criteria} ${\mathbf \epsilon_1, \epsilon_2}$
		\State \textbf{Output:}
		\State {A completed matrix}$\widehat{{\mathbf X}}${ of the original matrix}
		
		\Procedure{MIMAT}{${\mathbf A},{\mathbf \epsilon_1},{\mathbf \epsilon_2}$}
		\State $\Omega \leftarrow \{(i,j):\mathbf{A_{ij}}\neq 0 \}$
		\State $ c \leftarrow 1$
		\State $k \gets 0$
		\State $\mu_0 \gets ||A||_2$
		\State $Z_0 \gets \underline{0}_{m\times n}$
		\While {$||P_\Omega(\mathbf{A}-\mathbf{Z_k)}||_F^2 \geq \epsilon_1$}
		\While {$||\mathbf{X}_{k+1}-\mathbf{X}_{k}||_F^2 \geq \epsilon_2 $}
		\State $\mathbf{[U \Sigma V^T]} \leftarrow \rm{SVD} (\mathbf{X_k})$
		\State $\mathbf{X_{k+1}} \leftarrow \mathbf{U} {\max}(\mathbf{\Sigma}-\mu_k I,0) \mathbf{V^T}$
		\State $Z_{k+1} \gets X_{k+1}$
		\State $\mu_{k+1} \leftarrow \sigma_{c}(\mathbf{X_{k+1}}) $		
		
		\State $\mathbf X_{k+1} \leftarrow \mathbf P_\Omega(A) + \mathbf P_{\Omega^\bot} \mathbf (X_{k+1})$
		\State $k \leftarrow k+1$
		\EndWhile
		\State $ c \leftarrow c+1$
		\EndWhile
		\State \textbf{return} $\quad \widehat{{\mathbf X}}$
		\EndProcedure
	\end{algorithmic}
\end{algorithm}

%
%
\section{Simulation Results}
We implemented our simulations for two general cases. First, we assumed we know the sparsity number, and implemented all the methods including ours. In this case, we provided two types of figures. First, we fixed the input SNR, and varied the number of measurements, and plotted the output SNR versus the number of measurements. In the other type of figure, we plotted the output SNR versus input SNR while we have fixed the number of measurements. We also plotted all the figures mentioned above for the case when we do not have the knowledge of the sparsity number. 
We briefly summarize our observations as follows:
We observe that working with small number of measurements (approaching twice the sparsity number), our method outperforms other methods in reconstruction as shown in Figures \ref{fig:ashkan2}, \ref{fig:ashkan3}, \ref{fig:ashkan4}. When we increase the measurements number, all methods perform similarly and give high SNR values. This could be observed in Figures \ref{fig:ashkan6}, and \ref{fig:ashkan9}. In Figure \ref{fig:ashkan6}, the difference between the performance of different methods does not exceed 10dB. Therefore, the dominance of INPMAT shows itself for low sampling rates as in Figures \ref{fig:ashkan2}, \ref{fig:ashkan3}, \ref{fig:ashkan4}, and \ref{fig:ashkan7}. In Figures \ref{fig:ashkan2} and \ref{fig:ashkan3}, the input SNR is fixed to be 40dB. The output SNR is computed for all methods versus sampling rate. INPMAT reaches the highest SNR values. INPMAT reconstructs the signal with high SNR even for about 20\% sampling rate, while the performances of other methods fall for even larger sampling rates.\\
In Figure \ref{fig:ashkan1}, we notice that the phase transition curve for our method falls above the one for Lasso, i.e. the lasso phase transition curve falls below 50\% success rate phase transition curve of INPMAT \cite{tanner}, which shows the resistance of our method in recovery to noise level in the input.
In Figure \ref{fig:ashkan10} ,we compared MIMAT to Soft-Impute, SVT, and SL0. Our method outperforms SVT and Soft-Impute as well as SL0. The performances of MIMAT and SL0 are approximately similar for missing data percentage less than 60\%. However, MIMAT outperforms SL0 for missing data percentage greater than 80\% as provided in \ref{fig:ashkan10}. The Root Mean Squared Error (RMSE) for MIMAT is 0.2 less than RMSE of SL0. Soft-Impute never reaches RMSE less than 0.2. The SVT performance is also worse than MIMAT for the same missing percentages. For instance, for 65 \% missing percentage, the MIMAT reaches RMSE of 0, yet SVT has RMSE around 0.3.
 In Figures \ref{fig:ashkan11},~\ref{fig:ashkan12},~\ref{fig:ashkan13}, and~\ref{fig:ashkan14} we have compared the reconstruction power of the proposed method in various sampling rates for the two sensing methods of random sampling and the Gaussian measurenments. We can clearly see that when the available measurements are small, random sampling method has the capability of better reconstruction than Gaussian measurment. In Figure \ref{fig:ashkan11}, where the sampling rate is 25\%, we can see that the Gaussian measurment method of sensing fails to reconstruct the signal while RS reconstructs the signal and achieves 80 dB as the output SNR for the input SNR of 160 dB. In sampling rate of 37.5 \%, the Gaussian measurement method performs better but still cannot achieve the performqnce by RS. For instance, in Figure \ref{fig:ashkan12}, for input SNR of 150 dB, RS sensing method performs 70dB above Gaussian measurement method. For sampling rates above 50 \%, their performance tend to become similar and both achieve similar SNR values as in Figures \ref{fig:ashkan13}, and \ref{fig:ashkan14}.\\
The computational complexity for different algorithms were also measured in deifferent settings and the runtimes did not show a same pattern, and varied based on the used setting. Thus, we refrain from providing runtime table for a specific setting due to the variable behavior. In general, the runtime by INPMAT and MIMAT were comparable to other methods. The runtime achieved by INPMAT for example, never exceeded twice the minimum runtime achieved by other methods in msecs. 
\begin{figure}
	\centering
	\includegraphics[width=1\linewidth]{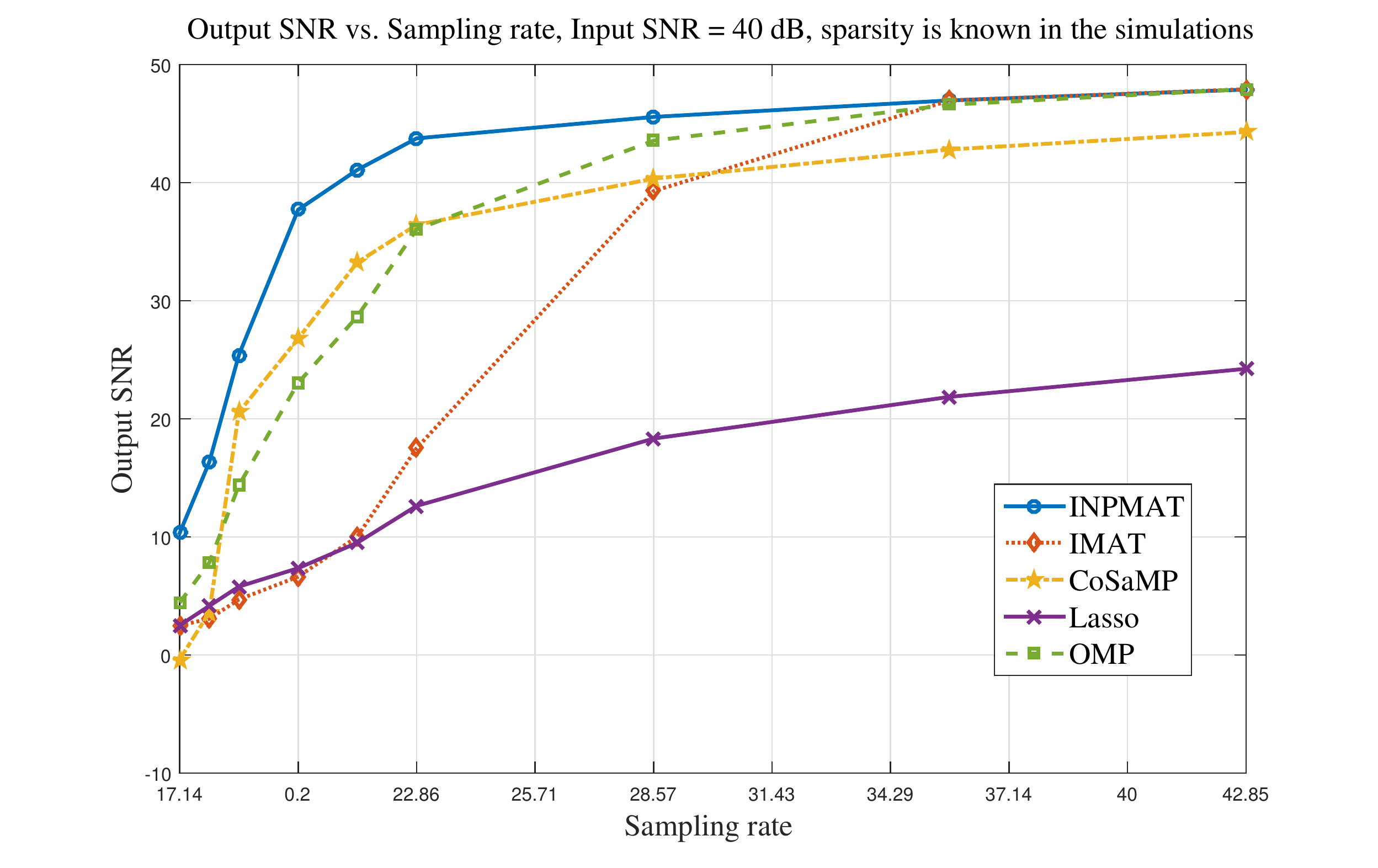}
	\caption{the signal size is $n=700$, sparsity equal to $40$.}\label{fig:ashkan2}
\end{figure}

\begin{figure}
	\centering
	\includegraphics[width=1\linewidth]{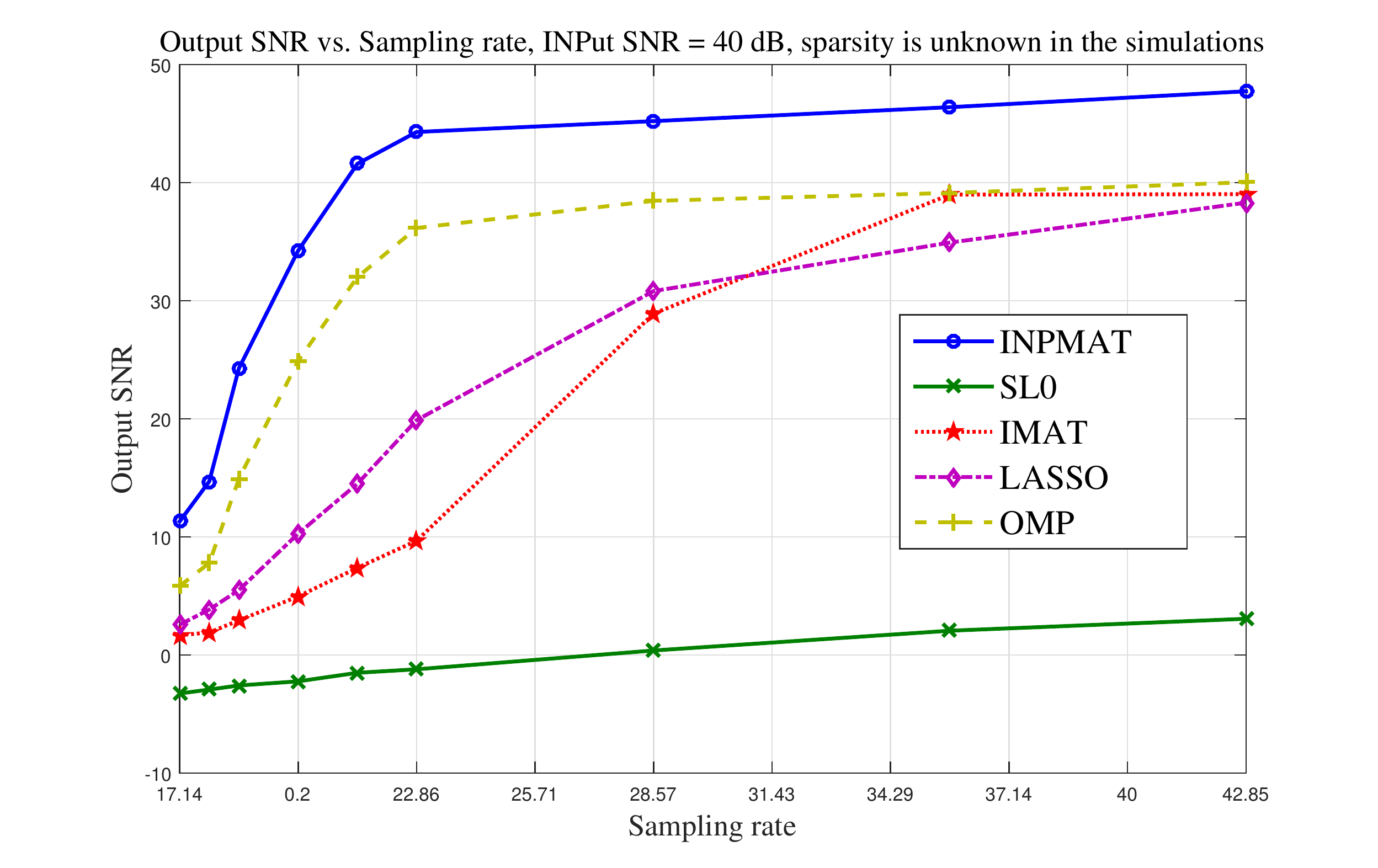}
	\caption{the signal size is $n=700$, sparsity equal to $40$.}\label{fig:ashkan3}
\end{figure}
\begin{figure}
	\centering
	\includegraphics[width=1\linewidth]{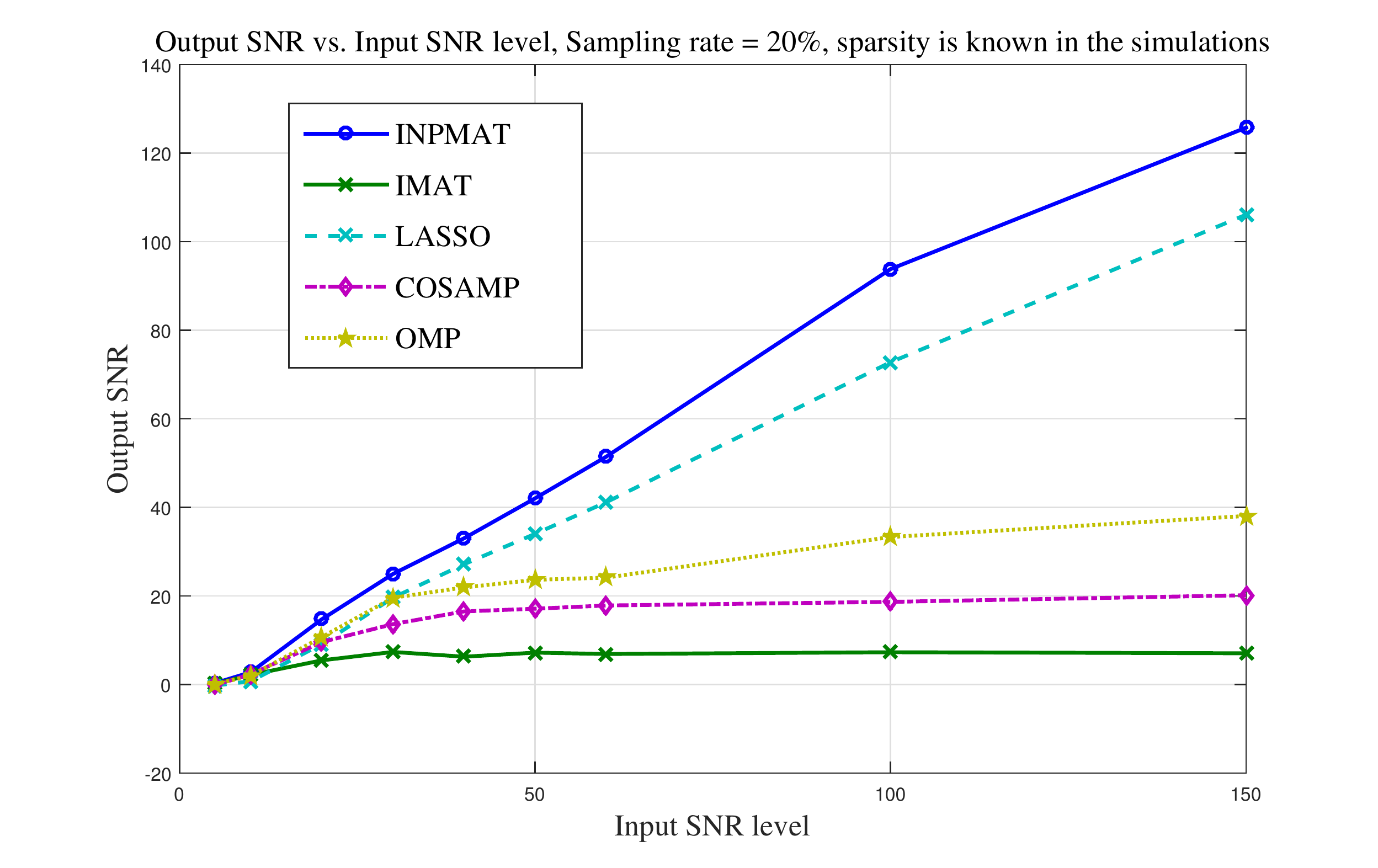}
	\caption{the signal size is $n=700$, sparsity equal to $40$.}\label{fig:ashkan4}
\end{figure}
\begin{figure}
	\centering
	\includegraphics[width=1\linewidth]{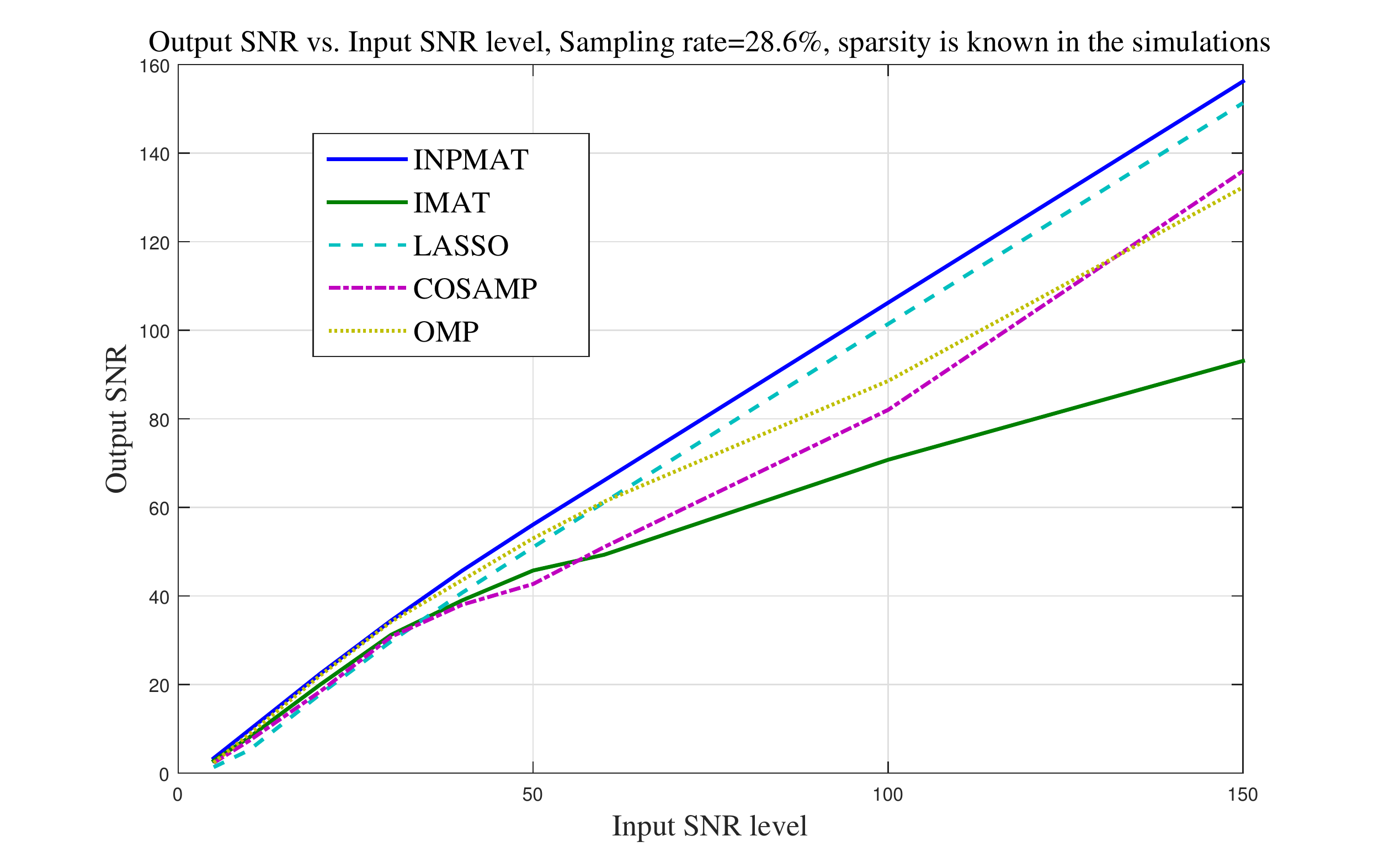}
	\caption{the signal size is $n=700$, sparsity equal to $40$.}\label{fig:ashkan5}
\end{figure}
\begin{figure}
	\centering
	\includegraphics[width=1\linewidth]{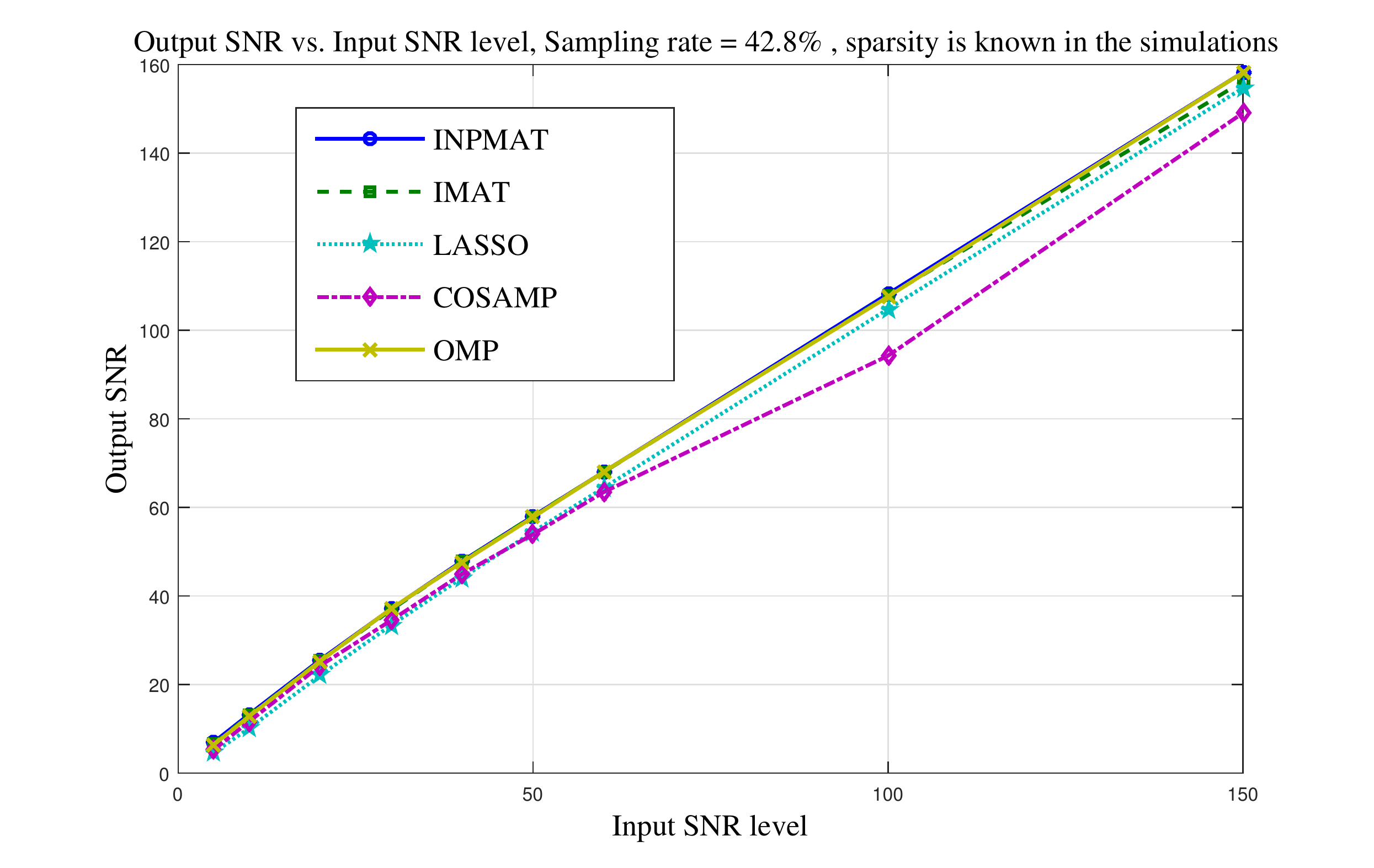}
	\caption{the signal size is $n=700$, sparsity equal to $40$.}\label{fig:ashkan6}
\end{figure}
\begin{figure}
	\centering
	\includegraphics[width=1\linewidth]{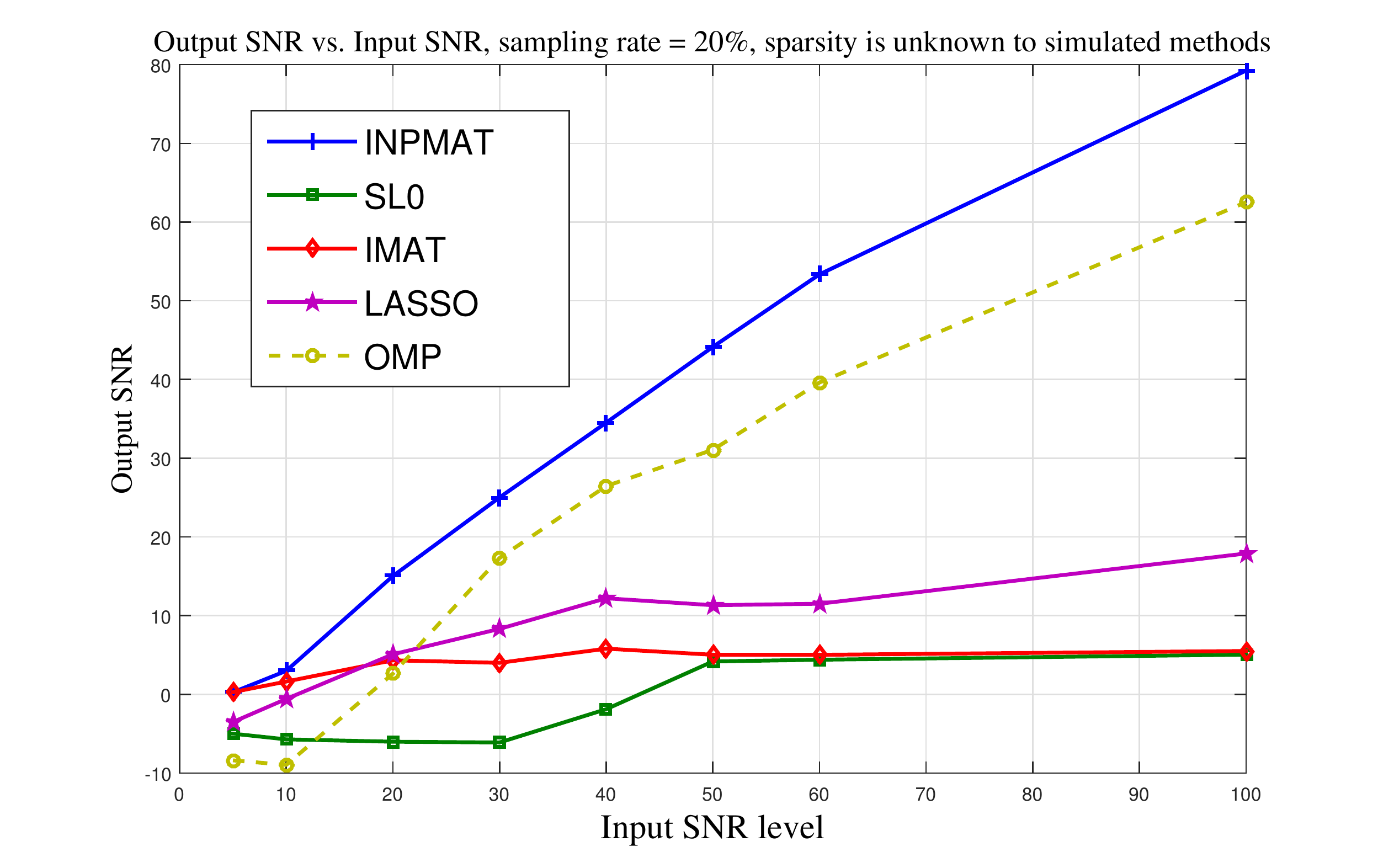}
	\caption{the signal size is $n=700$, sparsity equal to $40$.}\label{fig:ashkan7}
\end{figure}
\begin{figure}
	\centering
	\includegraphics[width=1\linewidth]{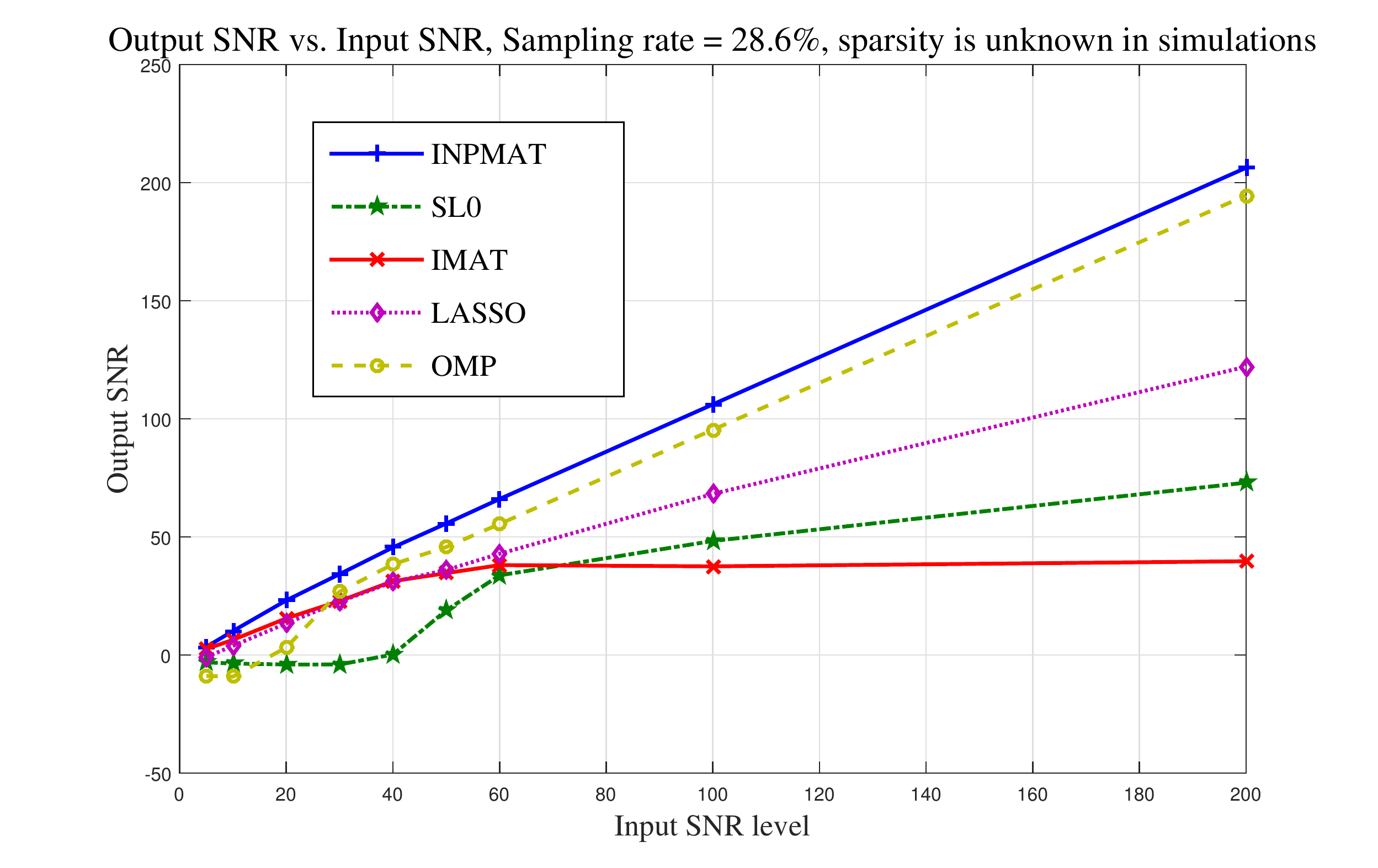}
	\caption{the signal size is $n=700$, sparsity equal to $40$.}\label{fig:ashkan8}
\end{figure}
\begin{figure}
	\centering
	\includegraphics[width=1\linewidth]{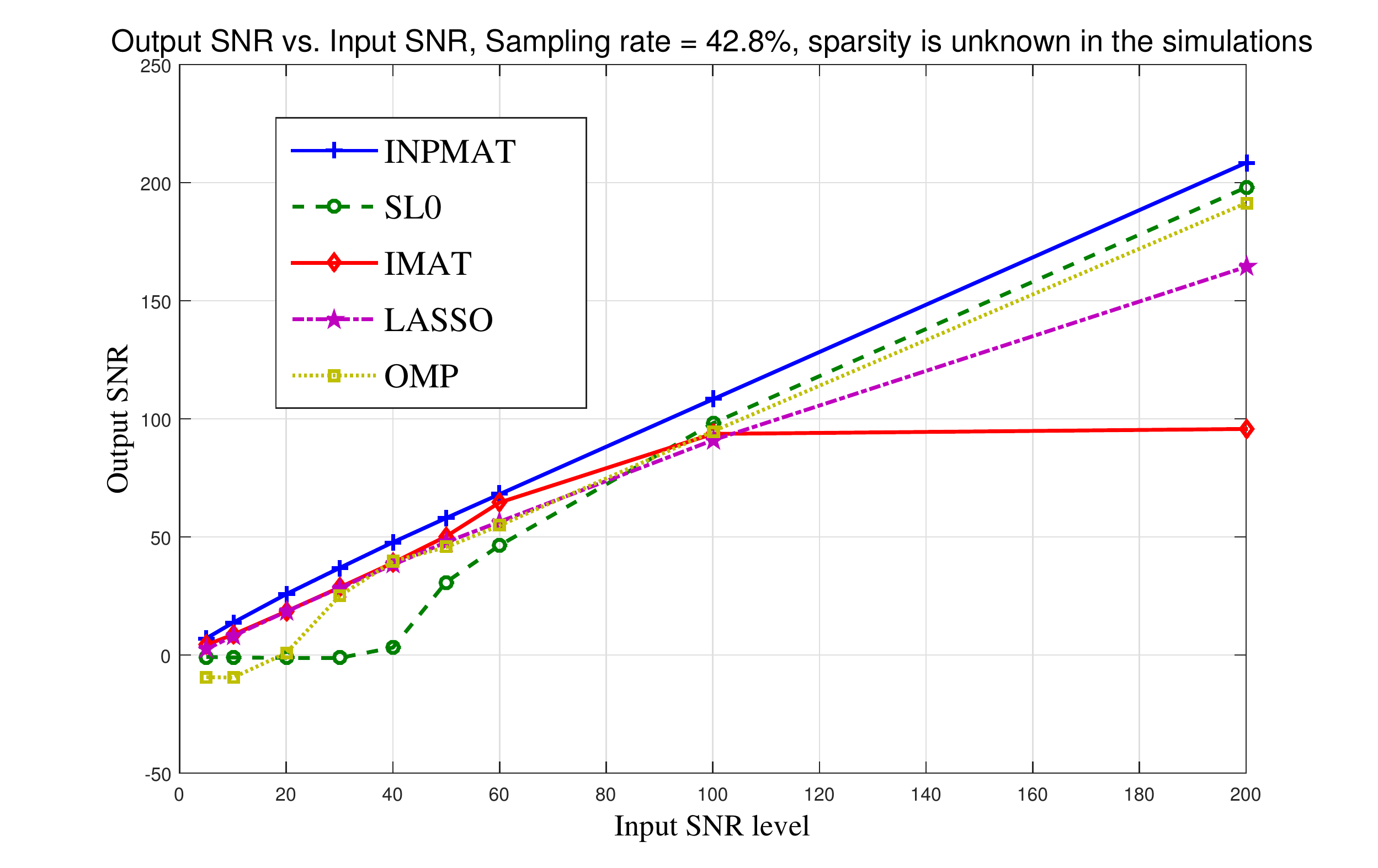}
	\caption{the signal size is $n=700$, sparsity equal to $40$.}\label{fig:ashkan9}
\end{figure}
\begin{figure}[h!]
	\centering
	\includegraphics[width=1\linewidth]{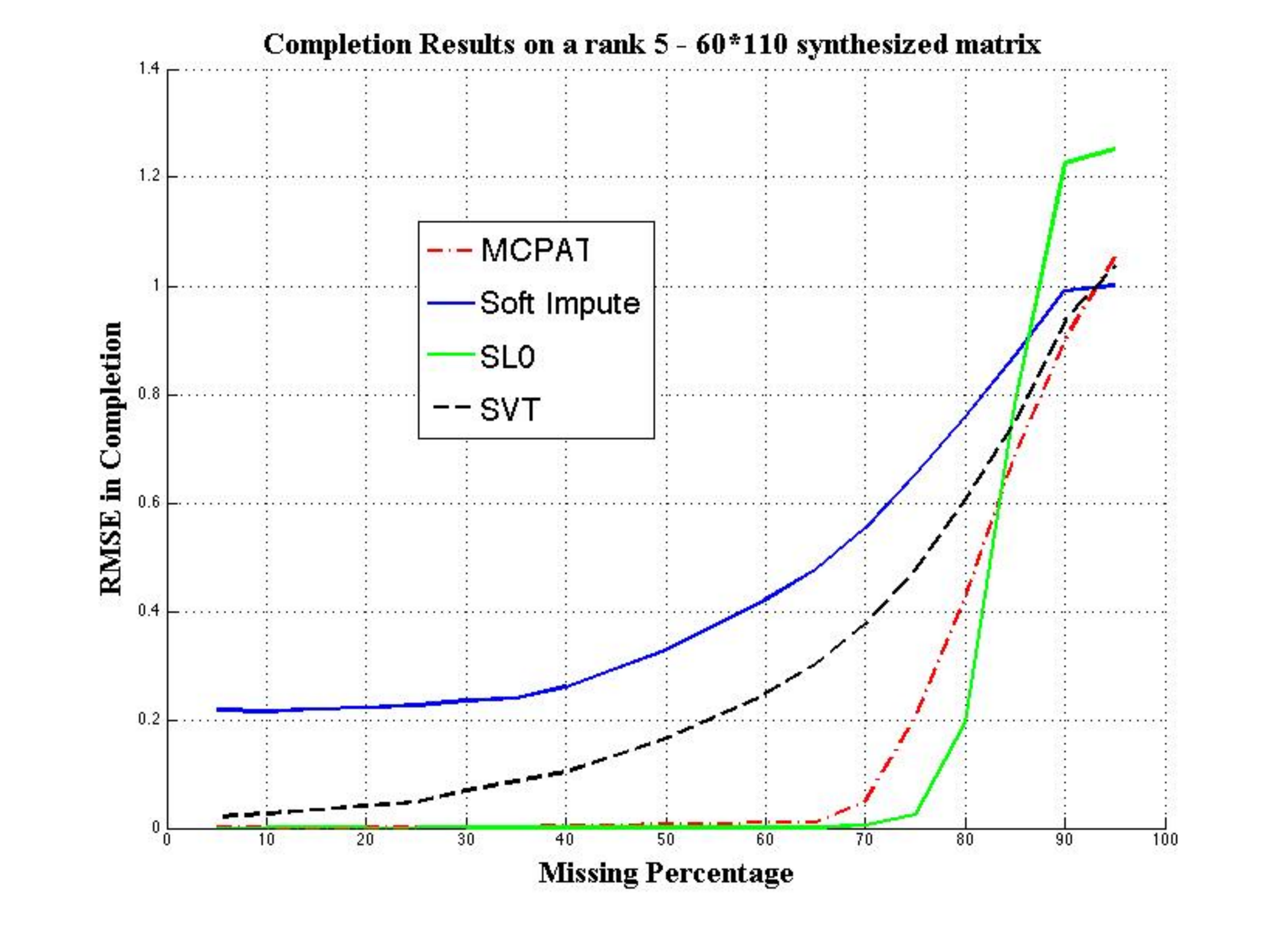}
	\caption{error comparison for matrix completion on a $60 \times 110$ matrix of rank $10$.}\label{fig:ashkan10}
\end{figure}
\begin{figure}
	\centering
	\includegraphics[width=1\linewidth]{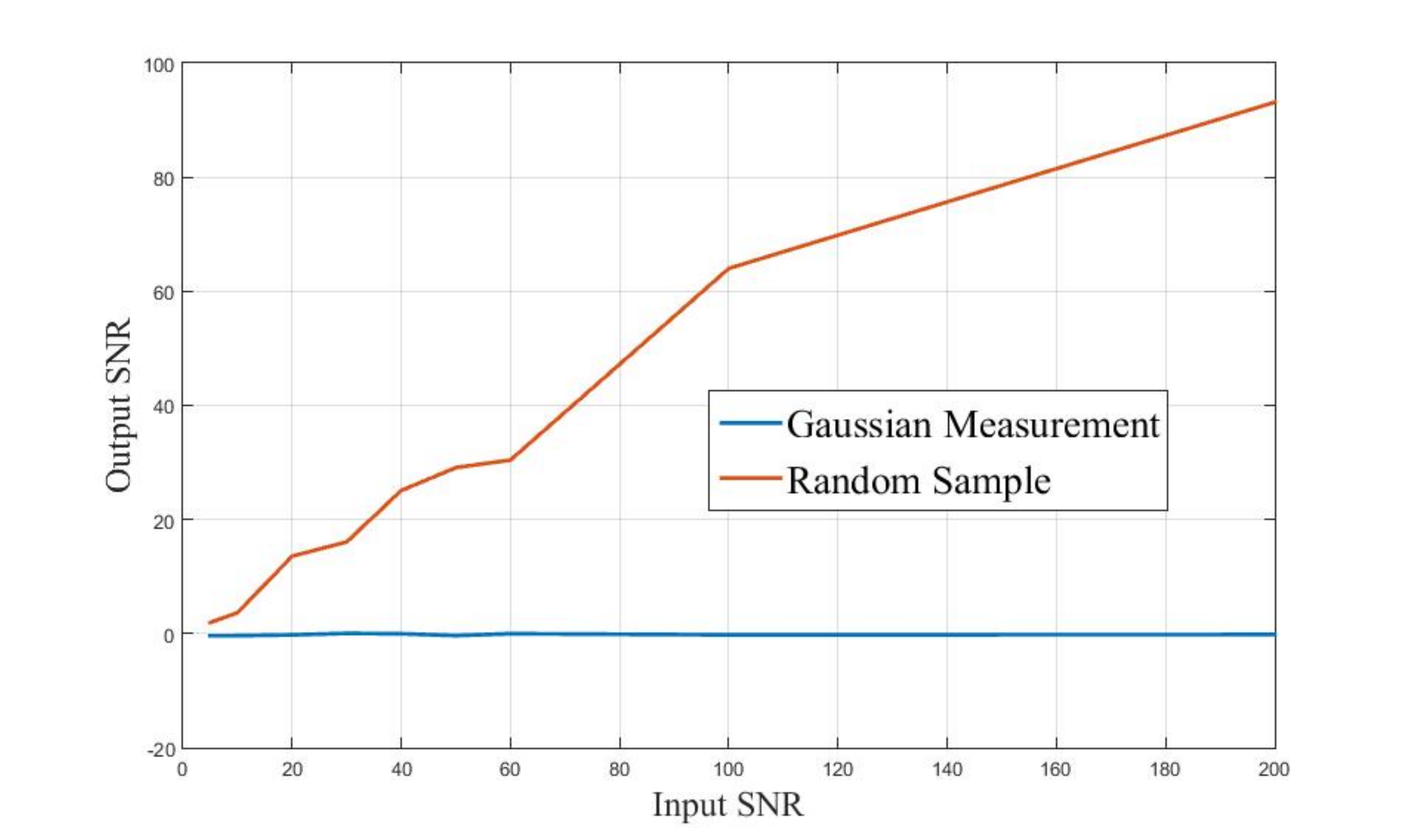}
	\caption{Comparison of Gaussian Measurement sensing matrix with Random Sampling method in reconstruction for 25 percent measurements}\label{fig:ashkan11}
\end{figure}
\begin{figure}
	\centering
	\includegraphics[width=1\linewidth]{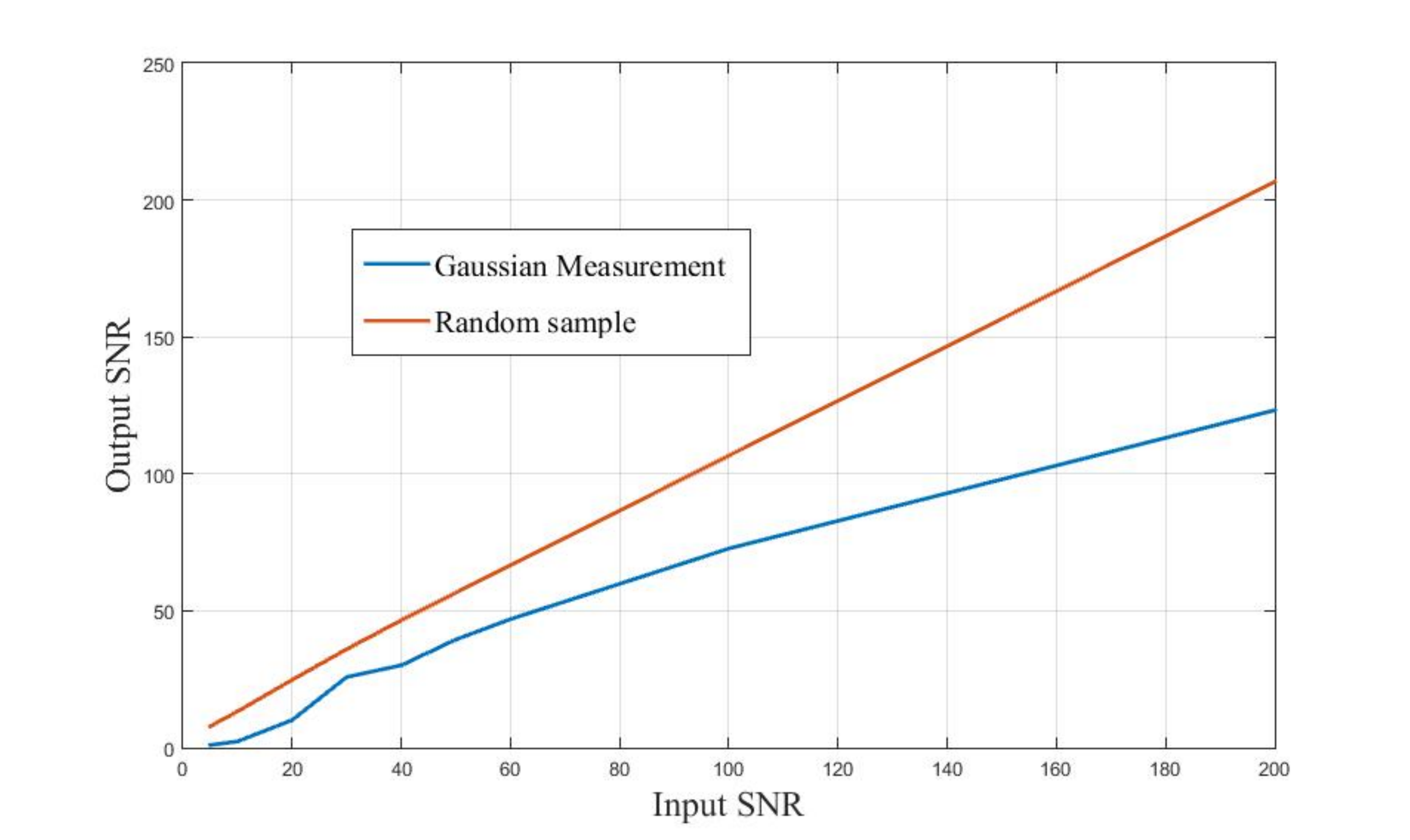}
	\caption{Comparison of Gaussian Measurement sensing matrix with Random Sampling method in reconstruction for 37.5 percent measurements}\label{fig:ashkan12}
\end{figure}
\begin{figure}
	\centering
	\includegraphics[width=1\linewidth]{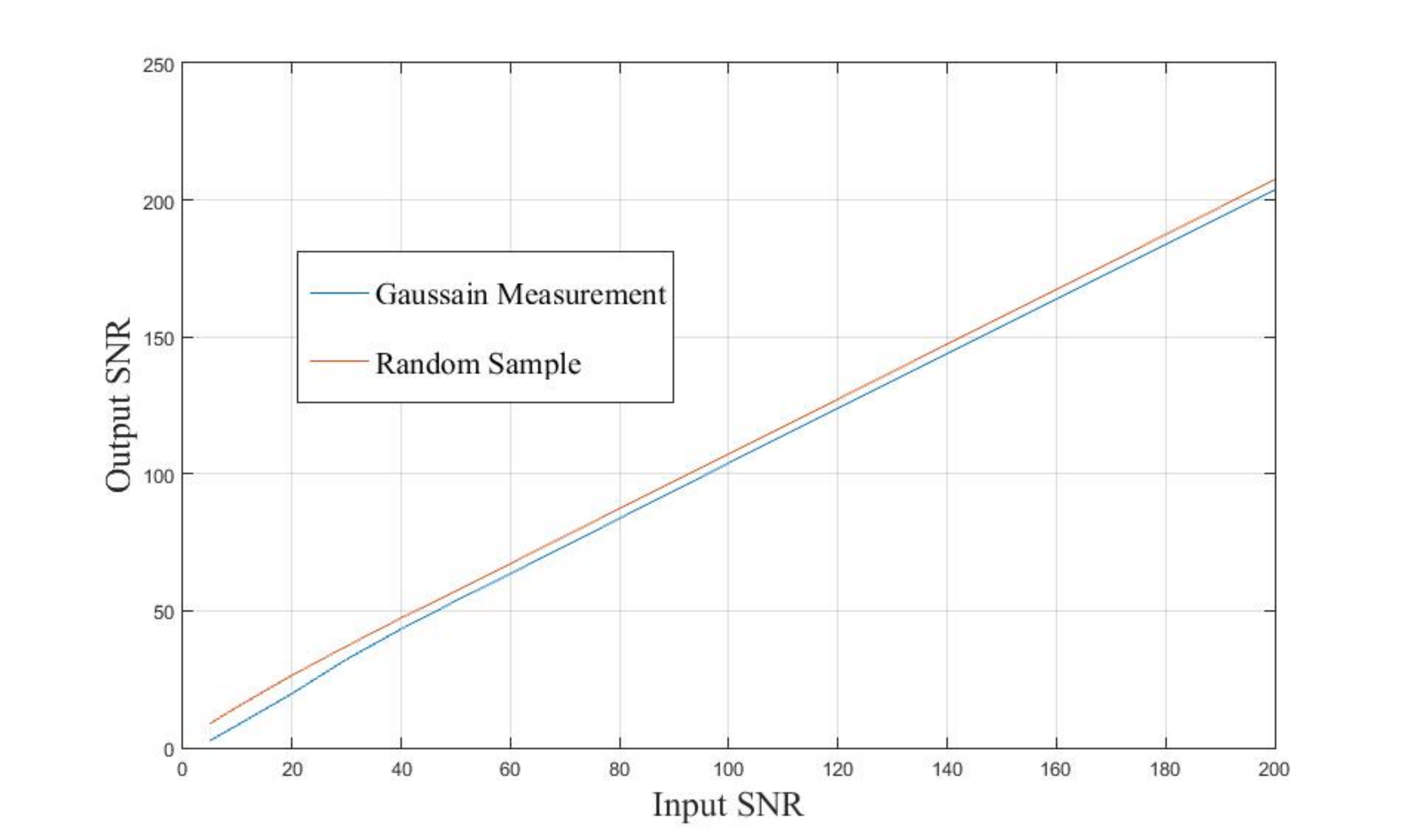}
	\caption{Comparison of Gaussian Measurement sensing matrix with Random Sampling method in reconstruction for 50 percent measurements}\label{fig:ashkan13}
\end{figure}
\begin{figure}
	\centering
	\includegraphics[width=1\linewidth]{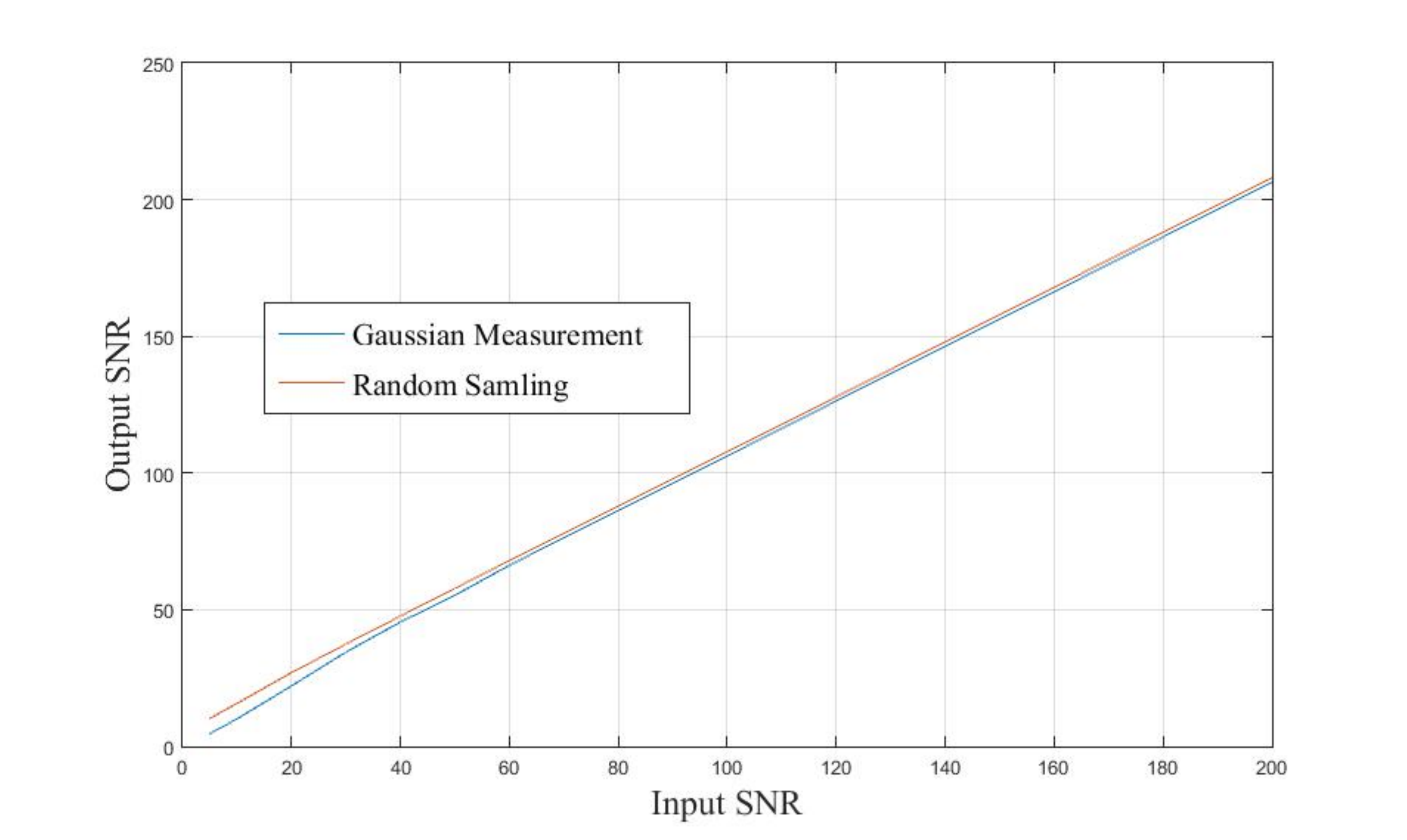}
	\caption{Comparison of Gaussian Measurement sensing matrix with Random Sampling method in reconstruction for 75 percent measurements}\label{fig:ashkan14}
\end{figure}

\section{Conclusion}  
We have introduced a new algorithm towards sparse signal recovery for CS, RS, and MC. The simulation results have shown that this sparse recovery method outperforms the well-known methods such as Lasso, OMP, SL0, and IMAT in terms of the SNR.
 A variation of this method which does not depend on computing pseudo-inverse at each step and works iteratively is more robust and faster in comparison to other mentioned methods. Simulation rsults imply that RS sensing measurement significantly outperforms the Gaussian meeasurement. The improvement is more pronounced at lower sampling rates. We have also observed that the extension of this idea to matrix completion also outperforms SVT, Soft-Impute, and SL0. 
\subsection*{Future Works}
We can modify the objective function we started working with as considering $l_1$-norm term for components outside the support instead of $l_2$-norm.
In order to minimize this objective function, we should solve $l_1$ Residual Shrinkage Minimization problem. The motivation to use this concept is that this formulation heuristically results in more resistance to noise. Thus, we do not have to regularize the initiation point as stated in section III, \rm{A}. Additionally, using $l_1$-norm guarantees more precision in signal recovery since $l_1$-norm is a better surrogate for $l_0$- norm than $l_2$ residual.  


\bibliographystyle{IEEEtran}
\bibliography{reffff.bib}
\end{document}